\documentclass[conference]{IEEEtran}

\IEEEoverridecommandlockouts
\usepackage{cite}
\usepackage{amsmath, amsfonts, amssymb, amsthm}
\usepackage{graphicx}
\usepackage{textcomp}
\usepackage{multirow}
\usepackage{booktabs}
\usepackage{array}
\usepackage{threeparttable}
\usepackage{float}
\usepackage{textpos}
\usepackage{tabularx}
\usepackage{afterpage}
\usepackage{placeins}
\usepackage{caption}
\usepackage{xcolor}
\usepackage{algpseudocode}
\usepackage{algorithm}

\usepackage{booktabs}
\usepackage{tabularx} 

\newtheorem{theorem}{Theorem}

\def\BibTeX{{\rm B\kern-.05em{\sc i\kern-.025em b}\kern-.08em
    T\kern-.1667em\lower.7ex\hbox{E}\kern-.125emX}}
\begin{document}


\title{NMCSE: Noise-Robust Multi-Modal Coupling Signal Estimation Method via Optimal Transport for Cardiovascular Disease Detection}

\author{Peihong Zhang, Zhixin Li*\thanks{Peihong Zhang and Zhixin Li contributed equally to this work.}, Yuxuan Liu, Rui Sang, Yiqiang Cai, Yizhou Tan, Shengchen Li\\School of Advanced Technology, Xi’an Jiaotong-Liverpool University, Suzhou, China}

\maketitle

\begin{abstract}

The coupling signal refers to a latent physiological signal that characterizes the transformation from cardiac electrical excitation, captured by the electrocardiogram (ECG), to mechanical contraction, recorded by the phonocardiogram (PCG). By encoding the temporal and functional interplay between electrophysiological and hemodynamic events, it serves as an intrinsic link between modalities and offers a unified representation of cardiac function, with strong potential to enhance multi-modal cardiovascular disease (CVD) detection. However, existing coupling signal estimation methods remain highly vulnerable to noise, particularly in real-world clinical and physiological settings, which undermines their robustness and limits practical value. In this study, we propose Noise-Robust Multi-Modal Coupling Signal Estimation (NMCSE), which reformulates coupling signal estimation as a distribution matching problem solved via optimal transport. By jointly aligning amplitude and timing, NMCSE avoids noise amplification and enables stable signal estimation. When integrated into a Temporal-Spatial Feature Extraction (TSFE) network, the estimated coupling signal effectively enhances multi-modal fusion for more accurate CVD detection. To evaluate robustness under real-world conditions, we design two complementary experiments targeting distinct sources of noise. The first uses the PhysioNet 2016 dataset with simulated hospital noise to assess the resilience of NMCSE to clinical interference. The second leverages the EPHNOGRAM dataset with motion-induced physiological noise to evaluate intra-state estimation stability across activity levels. Experimental results show that NMCSE consistently outperforms existing methods under both clinical and physiological noise, highlighting it as a noise-robust estimation approach that enables reliable multi-modal cardiac detection in real-world conditions.

\end{abstract}

\begin{IEEEkeywords}
Electrocardiogram, Phonocardiogram, Coupling Signal, Cardiovascular Disease Detection, Optimal Transport

\end{IEEEkeywords}

\section{Introduction}

Cardiovascular disease (CVD), the leading global cause of morbidity and mortality \cite{al2022hypertension}, can be effectively managed through early detection and precise diagnosis \cite{habetha2006myheart}. Electrocardiogram (ECG) and Phonocardiogram (PCG) are widely utilized non-invasive diagnostic modalities, with ECG providing insights into cardiac electrical activity \cite{berkaya2018survey} and PCG capturing heart sounds indicative of mechanical function \cite{rangayyan1987phonocardiogram}. However, relying on a single modality often leads to incomplete diagnostic information, thereby increasing the risk of misdiagnosis \cite{li2020discrimination}. As a consequence, multi-modal approaches that integrate ECG and PCG have been developed, effectively combining their complementary features to enhance diagnostic accuracy and reliability \cite{li2021prediction,huang2024deep,han2023multimodal}.

\begin{figure}[!t] 
\centering
\includegraphics[width=0.8\linewidth]{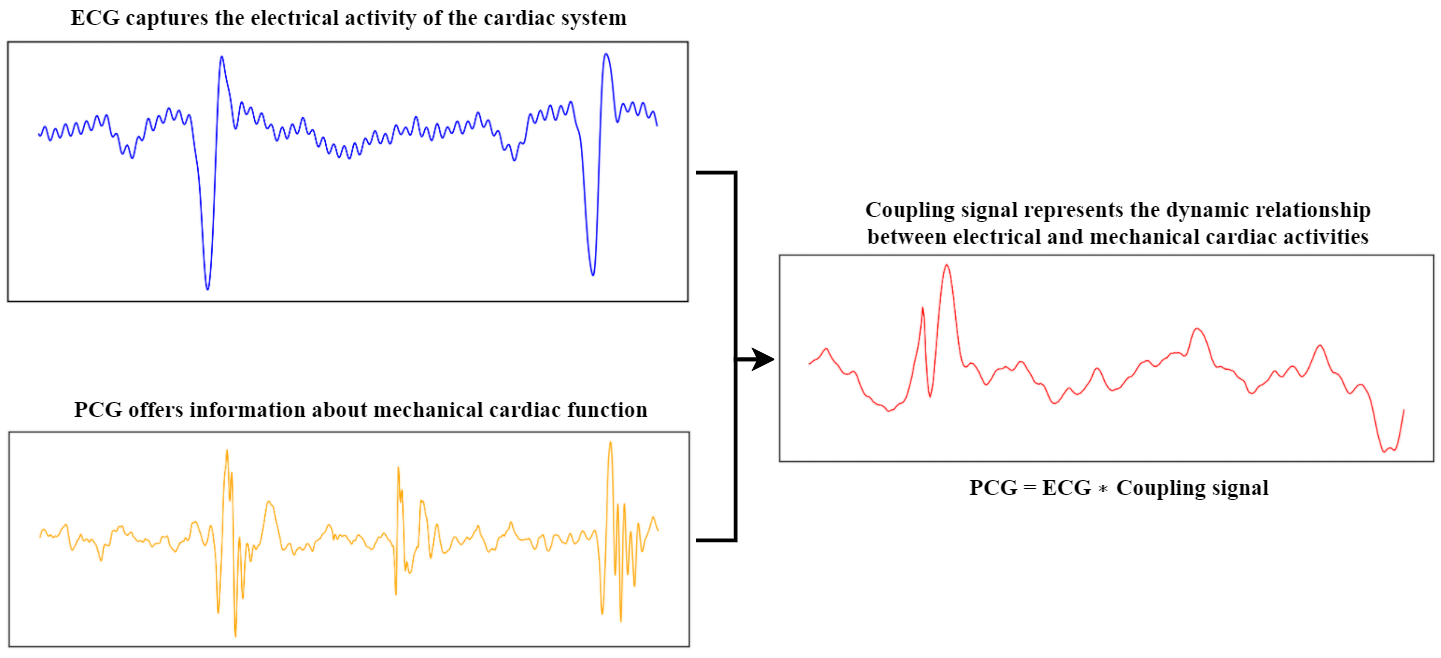} 
\caption{
Illustration of the relationship between ECG, PCG, and the coupling signal.
ECG reflects the heart’s electrical activity, while PCG conveys its mechanical response. The coupling signal captures the dynamic transformation between them.
}
\label{fig:ecg_pcg_coupling}
\vspace{-1.0em}
\end{figure}


The relationship between ECG and PCG originates from the electromechanical coupling of the heart, where electrical excitation governs mechanical contraction \cite{tafur1964normal}. Reflecting this physiological process, prior studies have modeled the PCG as the output of a system driven by the ECG, often formulated as a convolution with a latent impulse response, known as the coupling signal \cite{dong2023non}. This signal captures the temporal mapping from electrical to mechanical domains, reflecting how electrical activity propagates through the myocardium to produce acoustic responses. Estimating the coupling signal enables explicit modeling of cross-modal dynamics and has been shown to improve diagnostic performance when incorporated as a third modality alongside ECG and PCG \cite{sun2024enhanced}.




However, existing methods for coupling signal estimation are typically based on deconvolution \cite{sun2024enhanced,sun2024coronary}, an ill-posed inverse operation that is highly sensitive to noise \cite{formin2018inverse}. In practice, environmental interference during PCG acquisition—especially in clinical settings—renders such approaches unstable and limits the utility in real-world applications.

To address these limitations, we propose Noise-Robust Multi-Modal Coupling Signal Estimation (NMCSE), which reformulates the estimation task as a distribution matching problem solved via optimal transport. Rather than recovering the coupling signal through point-wise inversion, NMCSE aligns the global distributions of ECG-derived and PCG signals, allowing for more stable estimation under noise. We further introduce a Temporal-Spatial Feature Extraction (TSFE) network to integrate the estimated coupling signal with ECG and PCG for multi-modal classification, where convolutional and recurrent modules are jointly used to capture both spectral patterns and temporal dependencies.


To evaluate the robustness of the proposed method, we conduct two complementary experiments: one on the PhysioNet 2016 dataset \cite{clifford2016classification,liu2016open} with hospital noise \cite{ali2022robust}, assessing resilience to clinical interference; and another on the EPHNOGRAM dataset \cite{kazemnejad2021ephnogram}, evaluating stability under real motion-induced physiological noise. In both settings, NMCSE consistently outperforms baseline methods in terms of accuracy, correlation, and estimation consistency—achieving 97.38\% classification accuracy and 0.98 AUC—demonstrating its practical utility in multi-modal cardiac analysis.

\section{Related Work}

\subsection{Multi-Modal Fusion Methods for CVD Detection}

Recent advancements in cardiovascular disease (CVD) detection have transitioned from single-modality approaches to multi-modal frameworks that integrate electrocardiogram (ECG) and phonocardiogram (PCG) signals. These efforts have primarily followed three fusion paradigms:

\textbf{Feature-level fusion} methods extract features from each modality independently and then concatenate or combine them for classification. For instance, Li et al. \cite{li2021prediction} designed modality-specific neural networks with genetic algorithm-based optimization, while Hettiarachchi et al. \cite{hettiarachchi2021novel} applied transfer learning on PCG scalograms combined with ECG features.

\textbf{Decision-level fusion} methods produce individual predictions from separate models and combine them using ensemble strategies. Li et al.~\cite{li2022multi} applied evidence theory to weight the decisions of each modality according to their estimation.

\textbf{End-to-end deep learning} approaches enable joint representation learning from raw signals. Li et al. \cite{li2021integrating} proposed a multi-input CNN to learn from both ECG and PCG, while Zhang et al. \cite{zhang2024co} developed CPDNet, which utilizes modality-specific encoders and cross-modal attention mechanisms.

Despite their differences, these approaches treat ECG and PCG as independent or weakly correlated inputs, without explicitly modeling the physiological relationship between electrical and mechanical cardiac activity. However, recent studies suggest that explicitly modeling this relationship can enhance multi-modal performance, indicating the importance of incorporating physiological coupling into fusion frameworks—a direction we further elaborate in the next subsection.

\subsection{Transformer-Based Multi-Modal Learning Methods}

Recent studies have introduced Transformer architectures into multi-modal cardiovascular signal analysis to enhance feature fusion between ECG and PCG. For example, CAD‑ViT~\cite{liu2025integrating} employed a dual-modal Vision Transformer with co-attention mechanisms and dynamic weighted fusion to model cross-modal dependencies between electrical and mechanical signals. PACFNet~\cite{li2025progressive} proposed a progressive attention-based framework, stacking multiple self-attention layers to integrate modality-specific features. However, these approaches often depend on extensive labeled data and exhibit vulnerability to random noise, limiting their applicability in real-world clinical scenarios.

\subsection{Coupling Signal Estimation Methods}

To incorporate the physiological relationship between ECG and PCG into multi-modal frameworks, a common approach is to estimate the latent coupling signal that characterizes the electromechanical transformation of the heart. This signal is typically modeled as the impulse response of a linear time-invariant system, with the PCG viewed as the convolution of the ECG and the coupling filter.

Several studies have adopted deconvolution-based techniques to estimate this signal. Representative methods include classical inverse filtering, Tikhonov-regularized deconvolution, Wiener filtering with estimated noise spectra, and sparsity-constrained recovery. Sun et al. \cite{sun2024enhanced} demonstrated that such estimated coupling signals, when used as an auxiliary modality, can improve CVD detection performance.

Despite their utility, these methods are fundamentally ill-posed and prone to instability in noisy conditions. When the ECG exhibits low spectral energy at certain frequencies, spectral division during deconvolution can amplify noise and lead to unreliable estimates—especially in real-world clinical or ambulatory environments \cite{mohebbian2020single}.

This limitation has prompted exploration of alternative formulations that focus on distribution-level alignment rather than point-wise inversion. Building on this idea, we propose a noise-robust multi-modal coupling signal estimation method (NMCSE) based on optimal transport, which aligns the global distributions of ECG-transformed and PCG signals. This approach avoids noise amplification and enables stable coupling signal estimation even under real-world clinical noise.

\section{Noise-Robust Multi-Modal Coupling Signal Estimation: Theory and Algorithm}

This section establishes the mathematical foundation for applying optimal transport to ECG-PCG coupling signal estimation. We develop a comprehensive theoretical framework that addresses the limitations of deconventional approaches and provides rigorous error analysis with convergence guarantees.

\subsection{Model of Cardiac Electromechanical Coupling}


\begin{theorem}[ECG-PCG Transformation Model]
There exists a linear time-invariant system characterized by impulse response $h(t)$ such that, under ideal conditions, the PCG signal $X_{\mathrm{PCG}}(t)$ can be expressed as the convolution of the ECG signal $X_{\mathrm{ECG}}(t)$ with $h(t)$:
\begin{equation}
X_{\mathrm{PCG}}(t) = (X_{\mathrm{ECG}} * h)(t)
\end{equation}

In realistic clinical environments with noise, the observed PCG signal becomes:
\begin{equation}
\tilde{X}_{\mathrm{PCG}}(t) = (X_{\mathrm{ECG}} * h)(t) + \epsilon(t)
\end{equation}
where $\epsilon(t)$ represents additive noise from various sources.
\end{theorem}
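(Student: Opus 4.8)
\emph{Proof proposal.} The plan is to \emph{derive} the convolutional model from the physiology of cardiac electromechanical coupling rather than to verify it as a purely formal identity, since the statement is a modeling claim asserting the existence of a suitable LTI kernel $h(t)$. First I would fix the physiological premise --- standard in the electrophysiology literature --- that myocardial mechanical contraction is causally initiated by the propagating wave of electrical depolarization, so that the acoustic field recorded by the PCG is a deterministic functional of the excitation recorded by the ECG. I would then argue that, restricted to a single cardiac cycle near the resting state (or more generally under a short-horizon stationarity assumption), this functional admits a linear time-invariant approximation. Linearity follows from a small-signal / superposition argument: the net acoustic output is the sum of contributions generated by the individual myocardial segments and valve events, and within the physiological operating range each contribution responds approximately proportionally to its local electrical drive. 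Time-invariance follows because the tissue parameters governing conduction velocity, contractility, and valve mechanics are effectively constant on the time scale of one beat. Causality, i.e. $h(t)=0$ for $t<0$, is immediate since the mechanical event cannot precede the electrical one.

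Given these three properties, I would invoke the representation theorem for LTI systems: expressing $X_{\mathrm{ECG}}$ as a superposition of shifted impulses, using linearity to exchange the system operator with the superposition integral, and using time-invariance to identify the response to a shifted impulse with the correspondingly shifted impulse response, which yields
\begin{equation}
X_{\mathrm{PCG}}(t) = \int X_{\mathrm{ECG}}(\tau)\, h(t-\tau)\, d\tau = (X_{\mathrm{ECG}} * h)(t),
\end{equation}
establishing the noiseless equation. For the noisy case I would then model the acquisition chain: the PCG transducer additionally picks up ambient sound, contact and friction artifacts, respiration, and electronic sensor noise, none of which traverse the electromechanical pathway, so they enter additively and essentially independently of the clean term; collecting them into a single process $\epsilon(t)$ gives $\tilde{X}_{\mathrm{PCG}}(t) = (X_{\mathrm{ECG}} * h)(t) + \epsilon(t)$, as claimed.

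The main obstacle --- the step I would treat most carefully --- is the justification of the LTI approximation itself, because genuine cardiac dynamics are nonlinear (valve closure, turbulent flow, pressure-dependent stiffness) and time-varying (beat-to-beat variability, autonomic modulation, heart-rate dependence). The honest resolution is to state the linearization regime explicitly (single cycle, near-resting state, perturbations confined to the physiological range), to show that the first-order term of the input--output map is time-invariant under that regime, and to absorb the residual nonlinear and slowly time-varying mismatch into a bounded perturbation that can be folded into $\epsilon(t)$. This is precisely why the statement is qualified by ``under ideal conditions,'' and it motivates the subsequent development: rather than assuming the model error is zero, the optimal-transport formulation is designed to remain stable in its presence, which is the content of the error analysis that follows.
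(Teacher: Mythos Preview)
Your proposal is careful and physiologically well-motivated, but it does considerably more work than the paper itself. In the paper, Theorem~1 is stated without any proof or proof sketch; it functions as a modeling postulate that sets up the convolutional relation $X_{\mathrm{PCG}} = X_{\mathrm{ECG}} * h$ and its noisy variant, with the justification deferred implicitly to the cited prior work on electromechanical coupling. The subsequent theorems (error amplification, Sinkhorn properties, cost function, convergence) each receive a proof sketch, but this one does not.

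Your approach --- deriving the LTI structure from superposition over myocardial segments, short-horizon stationarity of tissue parameters, and causality of the electromechanical pathway, then invoking the impulse-response representation --- is a genuine argument rather than an assumption, and is the right way to treat a modeling claim if one wants it to stand on its own. Your explicit discussion of where the LTI approximation breaks down (valve nonlinearities, beat-to-beat variability) and the idea of absorbing the residual into $\epsilon(t)$ is also more honest than the paper's bare ``under ideal conditions.'' So there is no gap in your reasoning; the difference is simply that you supply a derivation where the paper is content to assert the model and move on to the analysis that actually depends on it.
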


\subsection{Limitations of Deconvolution in Noisy Environments}
\begin{theorem}[Error Amplification in Deconvolution]
Let $H_t(\omega)$ be the true frequency response of the coupling system and $H_a(\omega)$ be the estimated response using standard deconvolution. In the presence of noise with spectrum $E(\omega)$, the estimation error satisfies:
\begin{equation}
\|H_a - H_t\|_2^2 = \int_{-\infty}^{+\infty} \left|\frac{E(\omega)}{X_{\mathrm{ECG}}(\omega)}\right|^2 d\omega
\end{equation}
For any $\delta > 0$, if there exists a frequency set $\Omega_{\delta} = \{\omega : |X_{\mathrm{ECG}}(\omega)| < \delta\}$ with non-zero measure, then the error is lower-bounded by:
\begin{equation}
\|H_a - H_t\|_2^2 > \frac{1}{\delta^2}\int_{\Omega_{\delta}} |E(\omega)|^2 d\omega
\end{equation}
\end{theorem}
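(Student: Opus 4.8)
The plan is to move the entire problem into the frequency domain and then exploit the convolution theorem together with Parseval's identity. Taking Fourier transforms of the noisy model from Theorem~1, $\tilde{X}_{\mathrm{PCG}}(t) = (X_{\mathrm{ECG}} * h)(t) + \epsilon(t)$, gives $\tilde{X}_{\mathrm{PCG}}(\omega) = X_{\mathrm{ECG}}(\omega)\,H_t(\omega) + E(\omega)$, where $H_t$ is the transfer function of the coupling system and $E$ the noise spectrum. Standard deconvolution recovers the transfer function by spectral division, $H_a(\omega) = \tilde{X}_{\mathrm{PCG}}(\omega)/X_{\mathrm{ECG}}(\omega)$, which immediately yields the pointwise error identity $H_a(\omega) - H_t(\omega) = E(\omega)/X_{\mathrm{ECG}}(\omega)$ on the set where $X_{\mathrm{ECG}}(\omega) \neq 0$.

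First I would invoke Parseval/Plancherel to identify the $L^2$ norm of the estimation error with the frequency-domain integral, so that $\|H_a - H_t\|_2^2 = \int_{-\infty}^{+\infty} |H_a(\omega) - H_t(\omega)|^2\, d\omega = \int_{-\infty}^{+\infty} |E(\omega)/X_{\mathrm{ECG}}(\omega)|^2\, d\omega$, which is exactly the first claimed identity. This step needs only that the error be square-integrable, which holds whenever the spectral division is well-defined. For the lower bound I would then discard everything in the integral except the contribution from $\Omega_{\delta} = \{\omega : |X_{\mathrm{ECG}}(\omega)| < \delta\}$ and bound the denominator above by $\delta$ there: since $1/|X_{\mathrm{ECG}}(\omega)|^2 > 1/\delta^2$ pointwise on $\Omega_\delta$, integrating gives $\int_{\Omega_\delta} |E(\omega)/X_{\mathrm{ECG}}(\omega)|^2\, d\omega > (1/\delta^2)\int_{\Omega_\delta} |E(\omega)|^2\, d\omega$, and chaining with the identity delivers the stated bound.

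The computation itself is short, so the only genuine subtlety is the strictness of the final inequality: the pointwise estimate $1/|X_{\mathrm{ECG}}(\omega)|^2 > 1/\delta^2$ promotes to a strict \emph{integral} inequality only when $\int_{\Omega_\delta} |E(\omega)|^2\, d\omega > 0$, i.e. when the noise carries nonzero energy on the low-ECG-energy band; otherwise both sides vanish and one obtains only ``$\geq$''. I would therefore make this mild non-degeneracy assumption explicit — it is automatically met by any noise with full spectral support, including the hospital and motion noise used in the experiments. I would also remark that whenever $X_{\mathrm{ECG}}$ has spectral nulls (e.g. from band-limited acquisition), $\Omega_\delta$ retains positive measure even as $\delta \to 0$, so the right-hand side diverges — precisely the instability that motivates replacing point-wise inversion with the optimal-transport formulation of NMCSE.
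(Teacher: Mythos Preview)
Your approach is correct and matches the paper's own proof sketch: pass to the frequency domain, write the deconvolution estimate as $H_a = H_t + E/X_{\mathrm{ECG}}$, take the $L^2$ norm, then restrict to $\Omega_\delta$ and bound the denominator. Your added remark on the strictness of the inequality (requiring nonzero noise energy on $\Omega_\delta$) is in fact more careful than the paper, which leaves that point implicit.
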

\begin{proof}[Proof Sketch]
From the frequency domain representation of the noisy PCG signal $\tilde{X}_{\mathrm{PCG}}(\omega) = X_{\mathrm{ECG}}(\omega)H_t(\omega) + E(\omega)$, the deconvolution estimate becomes $H_a(\omega) = H_t(\omega) + \frac{E(\omega)}{X_{\mathrm{ECG}}(\omega)}$. The error term $\frac{E(\omega)}{X_{\mathrm{ECG}}(\omega)}$ is amplified when $X_{\mathrm{ECG}}(\omega)$ approaches zero, resulting in the error bound above.
\end{proof}

This theorem mathematically proves why deconvolution fails in realistic clinical scenarios: even small amounts of noise can cause unbounded errors when the ECG signal has low power in certain frequency bands, which is unavoidable in physiological signals.

\subsection{Distribution-Based Formulation via Optimal Transport}

To address the limitations of deconvolution, we reformulate coupling signal estimation as a distribution matching problem solved through optimal transport theory. Instead of operating in the traditional frequency domain with point-wise divisions, we optimize a parameterized coupling filter by minimizing the discrepancy between distributions:

\begin{equation}
h_\theta^* = \arg\min_{h_\theta} \mathcal{D}(P_{\mathrm{ECG}*h_\theta}, P_{\mathrm{PCG}})
\end{equation}

where $h_\theta$ is a parameterized coupling filter, $P_{\mathrm{ECG}*h_\theta}$ represents the distribution of the ECG signal transformed by the filter, $P_{\mathrm{PCG}}$ is the distribution of the PCG signal, and $\mathcal{D}$ is a distribution divergence measure.

The Wasserstein distance from optimal transport theory is particularly well-suited due to three key advantages:

\begin{equation}
W_p(P_1, P_2) = \left(\inf_{\gamma \in \Gamma(P_1, P_2)} \int_{\mathcal{X} \times \mathcal{X}} d(x, y)^p d\gamma(x, y)\right)^{1/p}
\end{equation}

First, it naturally accounts for the underlying metric structure of the signal space, incorporating both amplitude and temporal aspects essential for preserving physiologically meaningful relationships between cardiac electrical and mechanical events. Second, unlike divergence measures such as KL-divergence, Wasserstein distance exhibits bounded sensitivity to outliers ($W_p(P_1, P_2) \leq \text{diam}(\text{supp}(P_1) \cup \text{supp}(P_2))$), making it robust to sporadic noise present in clinical recordings.

Most importantly, our approach overcomes the instability of deconvolution. Unlike existing methods whose error grows with $\frac{1}{X_{\mathrm{ECG}}(\omega)}$, our error bound depends only on noise:
\begin{equation}
\|\hat{h}_\theta - h_t\|^2 \leq \alpha \cdot \mathbb{E}[|\epsilon(t)|^p]^{2/p} + \beta \cdot \text{Var}[t_\epsilon]
\end{equation}

This reformulation turns an ill-posed inversion into a well-posed optimization, ensuring stability even when ECG power is low. By aligning distributions rather than individual samples, it captures global signal structure and resists noise artifacts.

\subsection{Regularized Optimal Transport and Optimal Cost Function}
While the Wasserstein distance provides a theoretical framework, its computational complexity of $O(N^3)$ can be prohibitive for clinical applications. To address this challenge, we implement a regularized version of optimal transport using the Sinkhorn algorithm, combined with a carefully designed cost function that balances amplitude and temporal fidelity.

\begin{theorem}[Sinkhorn Distance Properties]
The entropy-regularized optimal transport (Sinkhorn distance):
\begin{equation}
\mathcal{D}_\varepsilon(P_1, P_2) = \min_{\Pi \in \mathcal{U}(P_1, P_2)} \{\langle \Pi, C \rangle + \varepsilon H(\Pi)\}
\end{equation}
where $H(\Pi) = -\sum_{i,j} \Pi_{i,j}\log\Pi_{i,j}$ is the entropy of the transport plan, achieves computational efficiency and convergence guarantees necessary for coupling signal estimation.
\end{theorem}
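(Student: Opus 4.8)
The plan is to proceed in three steps — characterize the regularized minimizer, exhibit the Sinkhorn scaling iteration and prove its convergence, then bound the per-iteration cost against the $O(N^3)$ baseline of exact transport — with the contraction estimate being the anticipated obstacle. First I would note that $\mathcal{U}(P_1,P_2)$ is a nonempty, compact, convex polytope and that the regularized objective $\langle\Pi,C\rangle+\varepsilon H(\Pi)$ is strictly convex on it thanks to the entropic term, so a unique minimizer $\Pi^{\star}$ exists; writing the Lagrangian with multipliers $f$ and $g$ for the two marginal equalities and zeroing the gradient gives the closed form
\begin{equation}
\Pi^{\star}_{ij}=u_i\,K_{ij}\,v_j,\qquad K_{ij}=e^{-C_{ij}/\varepsilon},\quad u_i=e^{f_i/\varepsilon},\quad v_j=e^{g_j/\varepsilon},
\end{equation}
so that evaluating $\mathcal{D}_\varepsilon$ reduces to the matrix-scaling problem of finding positive vectors $u,v$ with $u\odot(Kv)=P_1$ and $v\odot(K^{\top}u)=P_2$.

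Next I would introduce the Sinkhorn iteration $u\leftarrow P_1\oslash(Kv)$ followed by $v\leftarrow P_2\oslash(K^{\top}u)$ (Hadamard division), each half-step exactly restoring one marginal, and establish convergence by a contraction argument on the projective cone of positive vectors. The key estimate is that the composite cycle is a strict contraction in the Hilbert projective metric $d_H$: by Birkhoff's theorem the linear map $v\mapsto Kv$ contracts $d_H$ with ratio $\kappa(K)=\frac{\sqrt{\eta(K)}-1}{\sqrt{\eta(K)}+1}<1$, where $\eta(K)=\max_{i,j,k,l}\frac{K_{ik}K_{jl}}{K_{jk}K_{il}}$, while Hadamard scaling and entrywise reciprocal are $d_H$-isometries. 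Since the cost $C$ is finite-valued, $K$ has strictly positive entries, hence $\eta(K)<\infty$ and $\kappa(K)<1$; the full cycle therefore contracts with ratio $\kappa(K)^2$, has a unique fixed point $(u^{\star},v^{\star})$, and $d_H(v^{(k)},v^{\star})\le\kappa(K)^{2k}\,d_H(v^{(0)},v^{\star})$, i.e.\ geometric convergence. This recovers the uniqueness from the first step and supplies the asserted convergence guarantee.

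I would then translate the projective-metric decay into the quantities that matter in practice: a Lipschitz estimate relating $d_H$ on the scalings to the $\ell_1$ marginal violation of $\mathrm{diag}(u)\,K\,\mathrm{diag}(v)$ shows the constraint residual vanishes geometrically, feeding this into the smooth dual objective shows the computed value converges to $\mathcal{D}_\varepsilon(P_1,P_2)$, and a final rounding step onto $\mathcal{U}(P_1,P_2)$ produces an admissible transport plan whose cost is within $\delta$ of optimal after $O(\log(1/\delta))$ cycles (or, through the sharper Altschuler--Weed--Rigollet analysis, $O(\delta^{-2}\log N)$ iterations). The efficiency claim then follows by inspection: each iteration is two matrix-vector products with $K$ and $K^{\top}$, costing $O(N^2)$ in general and $O(N\log N)$ when the cost has the separable/convolutional structure natural for a one-dimensional signal index, so the total work is near-quadratic, strictly below the cubic cost of linear-programming optimal transport.

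The hard part will be the contraction step: proving $\kappa(K)<1$ via Birkhoff's theorem and verifying that the alternating normalization genuinely composes two $d_H$-contractions on projective space — not merely on the affine simplex — and then supplying the Lipschitz glue that converts the resulting $d_H$ bound into a bound on the marginal residual and hence on $|\widehat{\mathcal{D}_\varepsilon}-\mathcal{D}_\varepsilon|$. This step is also where the standing assumption that $C$ is bounded becomes essential, since otherwise $\eta(K)=\infty$, the contraction ratio degenerates to $1$, and the guarantee collapses — precisely the regime the coupling-signal cost function must be designed to avoid.
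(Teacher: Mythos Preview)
Your proposal is correct and in fact considerably more rigorous than the paper's own treatment, but it emphasizes different aspects. The paper's proof sketch does not actually establish convergence of the Sinkhorn iterates at all; it simply asserts the $O(N^2\log N)$ complexity, notes that the entropy term yields strict convexity (hence differentiability for gradient-based optimization of $h_\theta$), and then states an approximation bound $|W_p(P_1,P_2)-\mathcal{D}_\varepsilon(P_1,P_2)|\le \varepsilon\, C\log N$ together with a heuristic derivation of the practical regularization value $\varepsilon_{\mathrm{opt}}\approx \mathbb{E}[C_{i,j}]/10\approx 0.1$ for normalized cardiac signals. Your route---Lagrangian characterization $\Pi^\star_{ij}=u_iK_{ij}v_j$, then geometric convergence of the alternating scaling via Birkhoff's contraction in the Hilbert projective metric---supplies exactly the convergence guarantee the theorem claims but the paper never proves, and your per-iteration cost accounting is sharper. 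What you omit, and the paper includes, is the bound relating $\mathcal{D}_\varepsilon$ to the unregularized Wasserstein distance and the application-specific choice of $\varepsilon$; these are what the paper means by ``necessary for coupling signal estimation,'' so if you want to match the paper's scope you should append a short paragraph bounding the entropic bias by $\varepsilon\log N$ (e.g., via $H(\Pi)\le\log N^2$ on $\mathcal{U}(P_1,P_2)$) and commenting on how $\varepsilon$ trades that bias against the conditioning of $K$.
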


\begin{proof}[Proof Sketch]
The Sinkhorn algorithm achieves computational efficiency of $O(N^2\log N)$ compared to $O(N^3\log N)$ for exact OT through iterative dual variable updates. The entropy term ensures strict convexity, yielding differentiability essential for gradient-based optimization. The approximation error is bounded by $|W_p(P_1, P_2) - \mathcal{D}_\varepsilon(P_1, P_2)| \leq \varepsilon C \log N$, ensuring convergence as $\varepsilon \rightarrow 0$.

For cardiac signals with normalized amplitude, we derive an optimal regularization parameter $\varepsilon_{\text{opt}} \approx \frac{\mathbb{E}[C_{i,j}]}{10} \approx 0.1$, balancing approximation accuracy with numerical stability. This specific value ensures the kernel matrix $K_{i,j} = \exp(-C_{i,j}/\varepsilon)$ remains well-conditioned for efficient computation while preserving the method's noise robustness.
\end{proof}

\begin{theorem}[Optimal Cost Function]
For ECG-PCG coupling signal estimation, the optimal cost function takes the form:
\begin{equation}
C_{i,j}^{(k)} = \alpha \underbrace{\left|\hat{y}_{i}^{(k)}(\theta) - y_{j}^{(k)}\right|^p}_{\text{amplitude cost}} + \beta \underbrace{\left|t_i^{(k)} - t_j^{(k)}\right|^q}_{\text{temporal cost}}
\end{equation}
with optimal weighting parameters satisfying:
\begin{equation}
\frac{\alpha}{\beta} \approx \frac{\sigma_{\text{temporal}}^2}{\sigma_{\text{amplitude}}^2} \cdot \frac{\mathbb{E}[|t_i - t_j|^q]}{\mathbb{E}[|y_i - y_j|^p]}
\end{equation}
\end{theorem}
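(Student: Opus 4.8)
The plan is to read ``optimal'' here as \emph{minimum asymptotic variance of the recovered coupling filter}, and to obtain the weighting rule from a two-stage argument: first a dimensional normalization of the two cost channels, then an inverse-variance (generalized-least-squares) weighting of the normalized channels. Throughout I would work in the noise model underlying the optimal-transport error bound stated earlier, in which the residual discrepancy between the ECG transformed by $h_\theta$ and the observed PCG decomposes into a zero-mean amplitude part of scale $\sigma_{\text{amplitude}}^2$ and an independent zero-mean timing jitter $t_\epsilon$ of scale $\sigma_{\text{temporal}}^2$; this makes the empirical target measure $P_{\mathrm{PCG}}$ a perturbed version of the ideal one and turns the regularized transport cost $\langle \Pi^\ast, C\rangle$ into a random objective whose minimizer $\hat\theta$ estimates the true parameter $\theta^\ast$. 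First I would fix $C = \alpha\, C^{\text{amp}} + \beta\, C^{\text{temp}}$ and, invoking the strict convexity and differentiability of the Sinkhorn functional established in the Sinkhorn-distance theorem together with the envelope theorem, reduce the first-order condition to $\langle \Pi^\ast, \nabla_\theta C\rangle = 0$, so that the estimating equation splits into an amplitude moment condition (weighted by $\alpha$) and a temporal moment condition (weighted by $\beta$).

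The second step is the weighting itself. Since $C^{\text{amp}}$ and $C^{\text{temp}}$ live on incommensurable scales, I would rescale each to unit expected magnitude, $\tilde C^{\text{amp}} = C^{\text{amp}} / \mathbb{E}[|y_i - y_j|^p]$ and $\tilde C^{\text{temp}} = C^{\text{temp}} / \mathbb{E}[|t_i - t_j|^q]$; this is exactly where the factors $\mathbb{E}[|y_i - y_j|^p]^{-1}$ and $\mathbb{E}[|t_i - t_j|^q]^{-1}$ enter the optimal $\alpha$ and $\beta$, and the $p$- and $q$-homogeneity of the cost kernels is what lets the two rescaling constants be written as plain moments of the pairwise differences. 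On the normalized channels, modeling the two as independent moment conditions of equal signal sensitivity but residual variances proportional to $\sigma_{\text{amplitude}}^2$ and $\sigma_{\text{temporal}}^2$, a delta-method expansion of $\hat\theta - \theta^\ast$ shows that $\mathrm{Var}(\hat\theta)$ depends only on the normalized weight ratio $\tilde r$:
\[
\mathrm{Var}(\hat\theta)\ \propto\ \frac{\tilde r^{\,2}\,\sigma_{\text{amplitude}}^2 + \sigma_{\text{temporal}}^2}{(\tilde r + 1)^2},\qquad \tilde r := \frac{\alpha}{\beta}\cdot\frac{\mathbb{E}[|y_i - y_j|^p]}{\mathbb{E}[|t_i - t_j|^q]}.
\]
Differentiating in $\tilde r$ and setting the derivative to zero gives $\tilde r^\ast = \sigma_{\text{temporal}}^2 / \sigma_{\text{amplitude}}^2$; unwinding the normalization yields $\alpha/\beta \approx (\sigma_{\text{temporal}}^2/\sigma_{\text{amplitude}}^2)\cdot(\mathbb{E}[|t_i - t_j|^q]/\mathbb{E}[|y_i - y_j|^p])$, which is the claimed relation.

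I expect the principal obstacle to be justifying the ``independent, equal-sensitivity'' reduction that collapses the general GLS weighting to pure inverse-variance weighting: the amplitude and temporal discrepancies are in fact coupled through the shared optimal plan $\Pi^\ast$ and through their joint dependence on $\theta$, and a fully rigorous sandwich-variance analysis would carry channel-specific curvature and sensitivity prefactors that deviate from the clean moment expressions above --- these deviations are precisely what the ``$\approx$'' in the statement absorbs. A secondary technical point, which I would dispatch quickly, is showing that the perturbation of $\Pi^\ast$ induced by the noise contributes only at higher order; this follows from the strict convexity of the entropic problem and the $O(\varepsilon)$ stability bound recorded in the Sinkhorn-distance theorem. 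With those two points handled at the level of leading-order asymptotics, the remainder is the one-dimensional minimization above plus routine constant-tracking, which I would present as a sketch in keeping with the other proofs in this section.
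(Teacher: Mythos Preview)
Your proposal is correct and, in fact, considerably more substantive than the paper's own argument. The paper's proof sketch is essentially a two-sentence heuristic: it asserts that ``each term's contribution should be inversely proportional to its error variance'' and notes that for $p=q=2$ this yields the stated ratio. There is no estimating-equation setup, no delta-method expansion, no explicit minimization --- the inverse-variance principle is invoked as a design axiom rather than derived.

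Your route is genuinely different in that it supplies a mechanism: you interpret ``optimal'' as minimum asymptotic variance of $\hat\theta$, split the Sinkhorn first-order condition into amplitude and temporal moment conditions, normalize each channel by its expected pairwise magnitude (which is where the $\mathbb{E}[|y_i-y_j|^p]$ and $\mathbb{E}[|t_i-t_j|^q]$ factors enter), and then carry out the one-dimensional variance minimization explicitly. What this buys you is an actual justification for both ingredients of the formula --- the variance ratio $\sigma_{\text{temporal}}^2/\sigma_{\text{amplitude}}^2$ from GLS weighting, and the moment ratio from dimensional normalization --- whereas the paper simply writes down the result. Your honest flagging of the ``independent, equal-sensitivity'' approximation as the place where the $\approx$ lives is also more precise than anything in the paper's sketch. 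The cost of your approach is length and the need to control the perturbation of $\Pi^\ast$, but given that the other proofs in the section are already sketches, presenting your argument at the same level of detail would be entirely appropriate and strictly more informative than what the paper offers.
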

\begin{proof}[Proof Sketch]
In cardiac electromechanical coupling, both amplitude and timing are physiologically important. To balance them optimally, each term's contribution should be inversely proportional to its error variance. For squared error measures ($p = q = 2$), this yields the optimal parameter ratio, providing a theoretical basis for our cost function.

\end{proof}

\subsection{NMCSE Algorithm Implementation}

We implement the coupling filter $h_\theta$ as a 64-coefficient FIR filter, capturing cardiac electromechanical dynamics while maintaining computational efficiency. This representation provides sufficient degrees of freedom to model the complex relationship between electrical and mechanical cardiac activities, while remaining computationally tractable for real-time clinical applications.

Based on the theoretical framework established above, we design a two-phase algorithm for NMCSE. Algorithm~\ref{alg:nmcse} presents the complete workflow:

\begin{algorithm}
\caption{NMCSE Algorithm}
\label{alg:nmcse}
\begin{algorithmic}[1]
    \State \textbf{Training Phase}
    \State Collect ECG-PCG pairs $\{X_{\mathrm{ECG}}^{(k)}, X_{\mathrm{PCG}}^{(k)}\}_{k=1}^{K}$ and build empirical distributions $\widehat{P}_{X_{\mathrm{ECG}}^{(k)}}$, $\widehat{P}_{X_{\mathrm{PCG}}^{(k)}}$
    \State Initialize $h_\theta$ and compute $\widehat{X}_{\mathrm{PCG}}^{(k)} = X_{\mathrm{ECG}}^{(k)} * h_\theta$ for each pair
    \State Calculate loss $\mathcal{L}(h_\theta) = \sum_{k=1}^{K} \mathcal{D}(\widehat{P}_{\widehat{X}_{\mathrm{PCG}}^{(k)}}, \widehat{P}_{X_{\mathrm{PCG}}^{(k)}})$ using Sinkhorn distance
    \State Optimize parameters to obtain $h_\theta^*$ via gradient descent
    \State \textbf{Inference Phase}
    \State For new pair $\{X_{\mathrm{ECG}}^{(\text{new})}, X_{\mathrm{PCG}}^{(\text{new})}\}$, compute initial $\widehat{X}_{\mathrm{PCG}}^{(\text{new})} = X_{\mathrm{ECG}}^{(\text{new})} * h_\theta^*$
    \State Update $h_\theta^*$ by minimizing $\mathcal{D}(\widehat{P}_{\widehat{X}_{\mathrm{PCG}}^{(\text{new})}}, \widehat{P}_{X_{\mathrm{PCG}}^{(\text{new})}})$, yielding $h_\theta^{\text{new}}$
\end{algorithmic}
\end{algorithm}

For practical implementation, we use the Sinkhorn algorithm with regularization parameter $\varepsilon=0.1$, which balances approximation accuracy with numerical stability. The cost function weights are set to $\alpha=1.0$ and $\beta=0.1$ with $p=q=2$, based on our theoretical analysis and empirical validation. Optimization employs ADAM with learning rate $10^{-3}$ and batch size 32, which achieves efficient convergence while maintaining numerical stability.

\subsection{Convergence Properties}
\begin{theorem}[NMCSE Convergence]
The NMCSE algorithm converges to a local minimum of the objective function:
\begin{equation}
\mathcal{L}(h_\theta) = \sum_{k=1}^K \mathcal{D}_\varepsilon(\hat{P}_{X_{\mathrm{ECG}}^{(k)}*h_\theta}, \hat{P}_{X_{\mathrm{PCG}}^{(k)}})
\end{equation}
at a rate of $O(1/\sqrt{T})$ for $T$ iterations with stochastic gradient descent, where the gradient is computed as:
\begin{equation}
\nabla_\theta \mathcal{L}(h_\theta) = \sum_{k=1}^K \sum_{i,j} \Pi_{i,j}^{(k)} \nabla_\theta C_{i,j}^{(k)}
\end{equation}
\end{theorem}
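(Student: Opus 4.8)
The plan is to assemble the statement from three standard ingredients: (i) differentiability of each entropy‑regularized transport term together with an envelope‑theorem identity that yields the claimed gradient formula; (ii) Lipschitz smoothness of $\mathcal{L}$ as a function of the finite‑dimensional FIR parameter vector $\theta$; and (iii) the classical $O(1/\sqrt{T})$ convergence rate of stochastic gradient descent on smooth, possibly non‑convex, objectives. The FIR parametrization ($64$ coefficients) is what makes $\theta$ finite‑dimensional, and the entropy regularization from the Sinkhorn‑distance theorem is what provides the smoothness that an exact Wasserstein objective would lack.

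First I would fix a batch index $k$ and study $\theta \mapsto \mathcal{D}_\varepsilon(\hat P_{X_{\mathrm{ECG}}^{(k)}*h_\theta}, \hat P_{X_{\mathrm{PCG}}^{(k)}})$. Since $X_{\mathrm{ECG}}^{(k)}*h_\theta$ is linear in $\theta$ and, for $p=q=2$, the cost entries $C_{i,j}^{(k)}(\theta)$ are quadratic (hence $C^\infty$) in the transported samples $\hat y_i^{(k)}(\theta)$, the cost matrix $C^{(k)}(\theta)$ is smooth in $\theta$. By the Sinkhorn‑distance theorem the entropy term $\varepsilon H(\Pi)$ makes the inner minimization strictly convex over the transport polytope $\mathcal{U}(P_1,P_2)$, so the optimal plan $\Pi^{(k)}(\theta)$ is unique. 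Danskin's theorem (the envelope theorem for a unique inner minimizer) then gives $\nabla_\theta \mathcal{D}_\varepsilon = \langle \Pi^{(k)}(\theta),\, \nabla_\theta C^{(k)}(\theta)\rangle$, i.e. only the explicit dependence of $C$ on $\theta$ contributes; summing over $k$ reproduces exactly the stated formula $\nabla_\theta \mathcal{L}(h_\theta) = \sum_k \sum_{i,j} \Pi_{i,j}^{(k)}\nabla_\theta C_{i,j}^{(k)}$.

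Next I would establish $L$‑smoothness of $\mathcal{L}$. Each $\nabla_\theta C_{i,j}^{(k)}$ is affine in $\hat y_i^{(k)}(\theta)$, hence Lipschitz in $\theta$ on any bounded parameter region with a constant controlled by $\|X_{\mathrm{ECG}}^{(k)}\|$. It then remains to show $\theta \mapsto \Pi^{(k)}(\theta)$ is Lipschitz, which follows from sensitivity analysis of the Sinkhorn fixed point: the Sinkhorn operator is a contraction in the Hilbert projective metric with modulus bounded away from $1$ whenever the kernel $K_{i,j}=\exp(-C_{i,j}/\varepsilon)$ is well conditioned — precisely the regime $\varepsilon \approx 0.1$ singled out earlier — so the implicit‑function argument gives a uniform Lipschitz bound on $\Pi^{(k)}(\theta)$. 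Combining this with the boundedness of $\Pi^{(k)}$ (entries in $[0,1]$, rows/columns summing to the marginals) and the Lipschitzness of $\nabla_\theta C^{(k)}$ yields a uniform Lipschitz constant $L$ for $\nabla_\theta \mathcal{L}$ over a bounded ball of FIR coefficients, with $L$ depending on the data norms, $\varepsilon$, and the filter length. The minibatch gradient is then an unbiased estimator of $\nabla_\theta \mathcal{L}$ with variance $\sigma^2<\infty$ (bounded signals, bounded $C$, $\Pi$ a coupling matrix). Feeding $L$‑smoothness and bounded variance into the standard non‑convex SGD descent inequality and telescoping with step size $\eta = c/\sqrt{T}$ gives $\min_{t\le T}\mathbb{E}\|\nabla\mathcal{L}(h_{\theta_t})\|^2 = O(1/\sqrt{T})$, i.e. convergence to a first‑order stationary point at the claimed rate.

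The main obstacle is the $L$‑smoothness step, specifically quantifying the dependence of the optimal Sinkhorn plan $\Pi^{(k)}(\theta)$ on $\theta$ uniformly over the region visited by the iterates, and ensuring that region stays bounded (e.g. via a coercivity or sublevel‑set argument on $\mathcal{L}$, or an explicit projection onto a compact FIR ball so that the iterates never escape). A secondary subtlety, standard for non‑convex SGD, is that "convergence to a local minimum" must be read as convergence of the (best‑so‑far) gradient norm to zero rather than convergence of iterates; I would phrase the conclusion in that precise $\min$‑over‑iterates form.
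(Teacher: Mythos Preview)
Your proposal is correct and follows essentially the same approach as the paper's proof sketch: obtain the gradient formula via the envelope/Danskin identity $\nabla_{C_{i,j}}\mathcal{D}_\varepsilon = \Pi_{i,j}$ combined with the chain rule through $C^{(k)}(\theta)$, then invoke standard non-convex SGD convergence theory for the $O(1/\sqrt{T})$ rate. Your treatment is in fact considerably more careful than the paper's---you explicitly verify the $L$-smoothness and bounded-variance hypotheses that the paper leaves as ``standard SGD convergence assumptions,'' and you correctly flag that the conclusion should be stated as vanishing best-iterate gradient norm rather than literal convergence to a local minimum.
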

\begin{proof}[Proof Sketch]
The Sinkhorn distance is differentiable with respect to input distributions when $\varepsilon > 0$, with the gradient with respect to the cost matrix given by $\nabla_{C_{i,j}} \mathcal{D}_\varepsilon = \Pi_{i,j}$. By the chain rule and standard SGD convergence assumptions, the algorithm converges to a local minimum at the stated rate.
\end{proof}

\subsection{Theoretical Comparison with Alternative Methods}

\begin{theorem}[Error Advantage]
All regularization-based deconvolution methods (Tikhonov, Wiener, sparsity-based) retain error terms dependent on:
\begin{align}
\int_{\Omega_{\delta}} \frac{|E(\omega)|^2}{f(|X_{\mathrm{ECG}}(\omega)|)} d\omega
\end{align}
where $f(\cdot)$ varies by method but always contains $|X_{\mathrm{ECG}}(\omega)|$ in the denominator, causing inherent noise amplification.

In contrast, NMCSE's error bound:
\begin{align}
\|\Delta h_{\mathrm{NMCSE}}\|^2 \propto \alpha \cdot \mathbb{E}[|\epsilon|^p]^{2/p} + \beta \cdot \mathrm{Var}[t_\epsilon]
\end{align}
is independent of $X_{\mathrm{ECG}}(\omega)$, providing superior performance in low-SNR settings.
\end{theorem}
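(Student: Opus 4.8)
\textit{Proof proposal.} The plan is to establish the two halves of the claim separately and then combine them in the low-SNR limit. For the first half I would argue method by method. For Tikhonov regularization the estimated transfer function is $H_a(\omega) = \overline{X_{\mathrm{ECG}}(\omega)}\,\tilde X_{\mathrm{PCG}}(\omega)/(|X_{\mathrm{ECG}}(\omega)|^2+\lambda)$; substituting $\tilde X_{\mathrm{PCG}} = X_{\mathrm{ECG}}H_t + E$ and collecting terms gives the decomposition $H_a - H_t = -\lambda H_t/(|X_{\mathrm{ECG}}|^2+\lambda) + \overline{X_{\mathrm{ECG}}}\,E/(|X_{\mathrm{ECG}}|^2+\lambda)$, so the noise-induced part of $\|H_a-H_t\|_2^2$ restricted to $\Omega_\delta$ equals $\int_{\Omega_\delta} |E|^2/f(|X_{\mathrm{ECG}}|)\,d\omega$ with $f(|X_{\mathrm{ECG}}|) = (|X_{\mathrm{ECG}}|^2+\lambda)^2/|X_{\mathrm{ECG}}|^2$, which manifestly carries $|X_{\mathrm{ECG}}|$ in the denominator and dips below $1$ (amplification) near $|X_{\mathrm{ECG}}|\approx\sqrt{\lambda}$. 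For Wiener filtering the identical algebra applies with $\lambda$ replaced by the local noise-to-signal spectral ratio $S_\epsilon(\omega)/S_{H_t}(\omega)$, yielding an $f$ of the same structure. For sparsity-constrained recovery there is no closed-form transfer function, so I would instead use the subdifferential optimality (KKT) conditions of the $\ell_1$-penalized least-squares problem restricted to the spectral band $\Omega_\delta$: on that band the forward Gram operator has operator norm at most $\delta^2$, hence the data-fidelity gradient cannot absorb a noise perturbation of size $\|E\|_{L^2(\Omega_\delta)}$ without the minimizer shifting by at least $\|E\|_{L^2(\Omega_\delta)}/\delta^2$ in the corresponding coordinates, which reproduces the $|X_{\mathrm{ECG}}|$-in-denominator dependence. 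The unifying observation is that every such estimator reconstructs $h$ only through the (possibly soft-thresholded) action of $X_{\mathrm{ECG}}$, so its effective gain on $\Omega_\delta$ is $O(\delta)$ and the noise is correspondingly amplified.

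For the second half I would simply invoke the NMCSE error bound already established earlier in the paper, $\|\hat h_\theta - h_t\|^2 \le \alpha\,\mathbb{E}[|\epsilon(t)|^p]^{2/p} + \beta\,\mathrm{Var}[t_\epsilon]$, and observe that its right-hand side is a functional of the noise process $\epsilon$ alone — an amplitude moment and a timing-jitter variance — with no occurrence of $X_{\mathrm{ECG}}(\omega)$ anywhere, since the optimal-transport objective never performs a spectral division. To make the comparison sharp I would then formalize the low-SNR regime: fix a noise spectrum $E$ and let $\delta\to 0$ while keeping $|\Omega_\delta|>0$, which is guaranteed for physiological ECG by the spectral-decay hypothesis used in the Error Amplification in Deconvolution theorem. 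On $\Omega_\delta$ the deconvolution error stays bounded below by $\int_{\Omega_\delta}|E|^2/f(|X_{\mathrm{ECG}}|)$ plus the bias floor $\int_{\Omega_\delta}\lambda^2|H_t|^2/(|X_{\mathrm{ECG}}|^2+\lambda)^2\,d\omega$, neither of which can be driven to zero, whereas the NMCSE bound is unchanged; dividing the two yields an unbounded advantage ratio as SNR degrades, which is exactly the stated superiority in low-SNR settings.

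I expect the main obstacle to be the sparsity-based case, because, unlike Tikhonov and Wiener, it admits no explicit transfer function and so the $f(|X_{\mathrm{ECG}}|)$ form cannot be read off directly. Handling it cleanly likely requires a minimax-style lower bound: restrict to the subspace of filters whose spectral support lies in $\Omega_\delta$, show the forward map $h\mapsto X_{\mathrm{ECG}}*h$ is a contraction there with factor at most $\delta$, and conclude via a standard ill-posed-inverse argument that \emph{any} estimator built from $\tilde X_{\mathrm{PCG}}$ — sparse-regularized or otherwise — incurs worst-case error at least $\|E\|_{L^2(\Omega_\delta)}/\delta$ on that subspace. A secondary subtlety is keeping the bias-versus-variance bookkeeping uniform over the regularization parameter, so that "always contains $|X_{\mathrm{ECG}}(\omega)|$ in the denominator" holds for every admissible $\lambda$ and not merely asymptotically small ones; this is resolved by noting that enlarging $\lambda$ only trades the variance term for an equally large bias term on $\Omega_\delta$, so no choice of regularization escapes the dependence on $|X_{\mathrm{ECG}}|$.
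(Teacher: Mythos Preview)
The paper does not supply a proof or even a proof sketch for this theorem; it is stated as a bare assertion and followed only by the explanatory sentence ``This difference explains why NMCSE outperforms regularization methods, particularly at lower SNRs\ldots''.  There is therefore nothing on the paper's side to compare your argument against --- your proposal is already far more detailed than anything the authors provide.

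That said, your analysis actually exposes a looseness in the theorem as written.  For Tikhonov (and Wiener) the noise-variance contribution you derive, $|E|^2/f(|X_{\mathrm{ECG}}|)$ with $f(|X_{\mathrm{ECG}}|)=(|X_{\mathrm{ECG}}|^2+\lambda)^2/|X_{\mathrm{ECG}}|^2$, does \emph{not} diverge on $\Omega_\delta$: as $|X_{\mathrm{ECG}}|\to 0$ one has $f\to\infty$ and the variance term vanishes.  The unavoidable error on $\Omega_\delta$ is instead carried by the bias term $-\lambda H_t/(|X_{\mathrm{ECG}}|^2+\lambda)\to -H_t$, which you correctly bring in at the end.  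So the paper's phrasing ``$f(\cdot)$ \ldots always contains $|X_{\mathrm{ECG}}(\omega)|$ in the denominator, causing inherent noise amplification'' is not literally what happens for regularized methods; what you are really proving is the sharper (and correct) statement that the \emph{total} error --- bias plus variance --- is bounded below on $\Omega_\delta$ uniformly in the regularization parameter.  If you write this up, make that reinterpretation explicit rather than leaving it implicit in the final paragraph.  Your minimax handling of the sparsity case is the right instinct, and the second half (invoking the earlier NMCSE bound and noting the absence of any $X_{\mathrm{ECG}}$ dependence) is exactly what the paper intends but never spells out.
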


This difference explains why NMCSE outperforms regularization methods, particularly at lower SNRs where frequencies with low ECG power cause severe noise amplification in deconvolution-based approaches.

\section{Temporal-Spatial Feature Extraction Network for ECG, PCG, and Coupling Signals}

In this section, we propose a comprehensive network architecture designed to process multiple cardiac modalities—ECG, PCG, and the NMCSE-estimated coupling signal—for robust cardiovascular disease detection. Our approach transforms time-domain cardiac signals into frequency-domain spectrograms and processes them through specialized Temporal-Spatial Feature Extraction (TSFE) modules. These modules efficiently capture both frequency characteristics and temporal dependencies, critical for analyzing the complex relationships between electrical and mechanical cardiac activities. The network progressively integrates information from all three modalities through a structured fusion mechanism, enabling effective discrimination between normal and abnormal cardiac conditions even in challenging clinical environments.

\subsection{Temporal-Spatial Feature Extraction Block}

\begin{figure}[htbp]
\captionsetup{belowskip=0pt,aboveskip=0pt}
\centering
\includegraphics[width=0.8\linewidth]{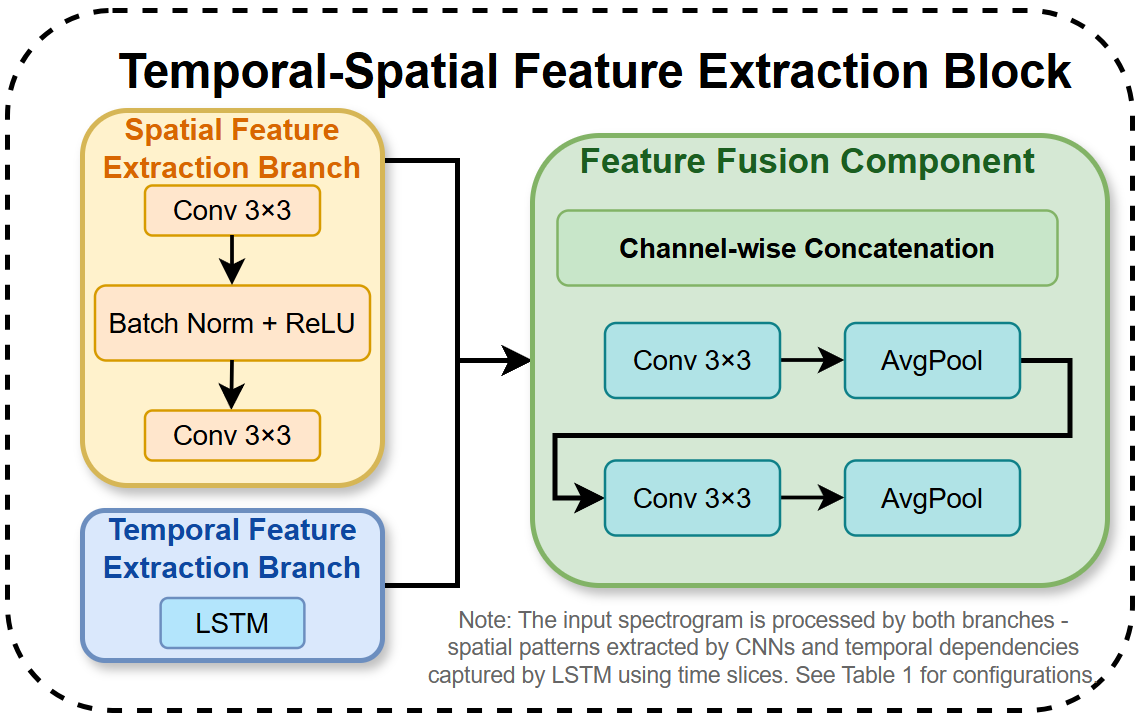}
\caption{Temporal-Spatial Feature Extraction Block.}
\label{fig:temporal_spatial_block}
\end{figure}

Fig.~\ref{fig:temporal_spatial_block} illustrates our proposed Temporal-Spatial Feature Extraction block, which forms the foundation of our network architecture. Each TSFE block consists of two parallel processing branches specifically designed to extract complementary feature representations: \textbf{(1) Spatial Feature Extraction Branch:} This pathway employs convolutional operations to capture spatial patterns within the spectrograms. \textbf{(2) Temporal Feature Extraction Branch:} This pathway utilizes LSTM units to model temporal dependencies and sequential patterns.

\begin{table*}[htbp]
\centering
\footnotesize
\renewcommand{\arraystretch}{0.75}
\caption{Layer Configuration of the Temporal-Spatial Feature Extraction Network}
\label{tab:adjusted_network_config}
\begin{tabular}{>{\centering\arraybackslash}m{3cm}|l|l|l|l}
\toprule
\textbf{Block} & \textbf{Layer} & \textbf{Parameter} & \textbf{Layer} & \textbf{Parameter} \\
\midrule
\multirow{2}{*}{TSFE Block 1} & Conv-1 & K3, S1, C16 & Conv-2 & K3, S1, C32 \\
                              & LSTM   & U32         & Conv-3 & K3, S1, C32 \\
\midrule
\multirow{2}{*}{TSFE Block 2} & Conv-1 & K3, S2, C32 & Conv-2 & K3, S1, C64 \\
                              & LSTM   & U64         & Conv-3 & K3, S1, C64 \\
\midrule
\multirow{2}{*}{TSFE Block 3} & Conv-1 & K3, S2, C64  & Conv-2 & K3, S1, C128 \\
                              & LSTM   & U128         & Conv-3 & K3, S1, C128 \\
\midrule
\multirow{2}{*}{Feature Fusion} & Conv-4 & K3, S1, C64 & AvgPool-4 & K2, S2 (2D) \\
                                & Conv-5 & K3, S1, C128 & AvgPool-5 & K2, S2 (2D) \\
\midrule
Classification Head & Flatten & -- & FC & 2 (\textit{sigmoid}) \\
\bottomrule
\end{tabular}

\vspace{6pt} 
\parbox{\textwidth}{
\footnotesize \textit{Notation:} K = kernel size ($K3 \equiv 3 \times 3$ for 2D convolutions); S = stride; C = number of output channels; U = number of LSTM hidden units (increasing by block); All convolutional layers are followed by Batch Normalization and ReLU.
}
\end{table*}

The features from both branches undergo channel-wise concatenation, integrating spatial and temporal characteristics while preserving their distinctive information. This concatenated feature representation then passes through an additional convolutional layer with batch normalization and ReLU activation, which reduces dimensionality while enhancing feature integration. This dual-branch architecture enables comprehensive feature extraction that surpasses the capabilities of single-pathway approaches by simultaneously modeling both spatial and temporal cardiac signal characteristics.

\subsection{Multi-Modal Processing model}

\begin{figure}[t]
\captionsetup{belowskip=1pt,aboveskip=1pt}
\centering
\includegraphics[width=\linewidth]{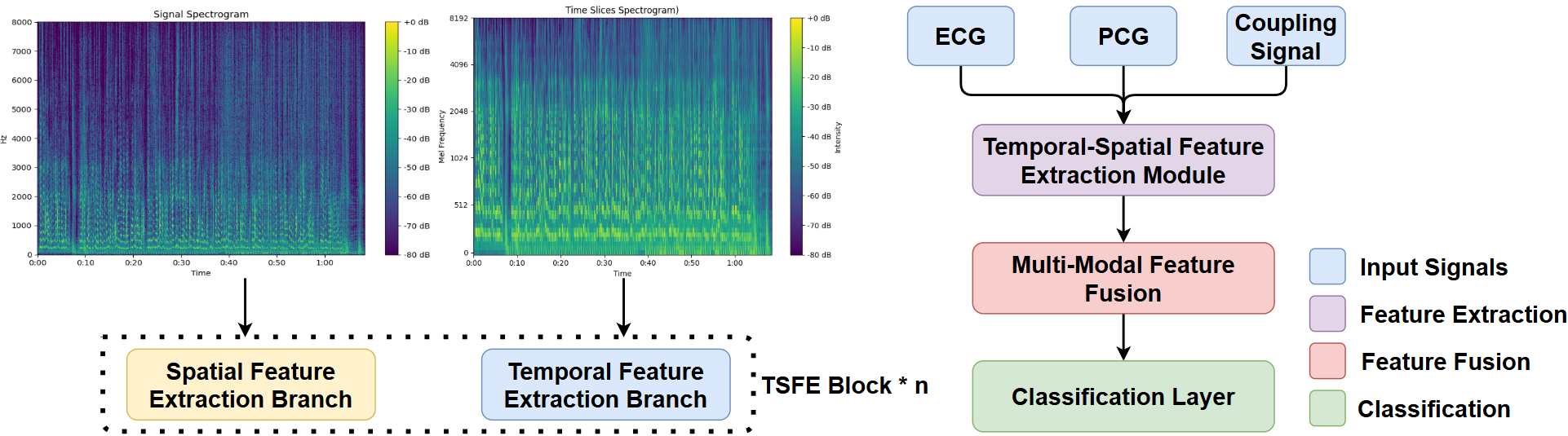}
\caption{Multi-Modal Processing Pipeline. Left: Signal spectrogram and time slices as inputs to spatial and temporal branches. Right: ECG, PCG, and coupling signals processed via TSFE, fused, and classified.}
\label{fig:multimodal_pipeline}
\vspace{-1.0em}
\end{figure}

Figure~\ref{fig:multimodal_pipeline} illustrates our multi-modal framework for CVD detection. The left portion shows the dual input representation: the complete signal spectrogram feeds into the Spatial Feature Extraction Branch, while time slices of the spectrogram serve as input to the Temporal Feature Extraction Branch, enabling comprehensive capture of both frequency distributions and temporal patterns. The right side presents the overall system architecture, where ECG, PCG, and coupling signal inputs are processed through the Temporal-Spatial Feature Extraction Module, followed by Multi-Modal Feature Fusion and the Classification Layer. This approach integrates complementary information from electrical activity, mechanical function, and their dynamic relationship, enabling robust CVD detection even under challenging clinical conditions.

Table~\ref{tab:adjusted_network_config} presents the detailed configuration of our proposed Temporal-Spatial Feature Extraction Network.

\section{Experiments and Results}


\subsection{Datasets and Data Preprocessing}

\subsubsection{PhysioNet/CinC 2016}
We used the PhysioNet/CinC Challenge 2016 dataset \cite{clifford2016classification, liu2016open}, specifically subset A containing 405 synchronized ECG-PCG recordings from 121 subjects. All signals were resampled to 2000 Hz, with durations ranging from 9 to 37 seconds. To the best of our knowledge, this is the only publicly available dataset that provides synchronized ECG and PCG recordings suitable for CVD detection. Five-fold cross-validation was applied throughout.

\subsubsection{Hospital Ambient Noise}
To simulate clinical noise conditions, we augmented the PCG recordings with sounds from the Hospital Ambient Noise Dataset \cite{ali2022robust}, which includes equipment beeps and ambient hospital sounds. Noise was mixed at various signal-to-noise ratios (SNRs) using linear additive augmentation to evaluate the robustness of coupling signal estimation under external interference.

\subsubsection{EPHNOGRAM}
The EPHNOGRAM dataset \cite{kazemnejad2021ephnogram} contains synchronized ECG and PCG recordings collected across multiple physical activities, including rest, walking, and stair climbing. It reflects realistic physiological variability and motion-induced noise. We used this dataset to assess the consistency of coupling signal estimation within the same physiological state, providing a complementary view of robustness beyond artificial noise injection.

\subsubsection{Data Preprocessing}
All ECG and PCG signals underwent z-score normalization to eliminate amplitude variations across subjects \cite{alkhodari2021convolutional}. We removed baseline drift and high-frequency artifacts using a 0.5--60 Hz Butterworth passband filter for ECG signals and a 20 Hz high-pass Butterworth filter for PCG signals \cite{gaikwad2014removal}. For uniformity, all signals were standardized to 10-second segments, with shorter signals cyclically padded and longer signals divided into overlapping segments with a 1-second stride to maximize data utilization.

\subsection{Comparative Analysis of Noise-Robust Multi-Modal Coupling
Signal Estimation Methods in Noisy Environments}
This experiment evaluates NMCSE Methods against conventional estimation methods under varying noise conditions.

\subsubsection{Experimental Setup}
We designed a progressive noise validation framework:
\begin{enumerate}
\item Used clean ECG and PCG recordings from PhysioNet dataset as baseline
\item Computed reference coupling signals using these clean recordings
\item Added various levels of hospital noise to the PCG signals, creating controlled test cases with known noise profiles
\item Evaluated how consistently each method could recover the reference coupling signals as noise levels increased
\end{enumerate}

We compared five methods:
\begin{itemize}
\item Standard deconvolution (baseline)
\item Tikhonov-regularized deconvolution ($\lambda=0.01$)
\item Wiener filtering (with estimated noise spectra)
\item Sparsity-based deconvolution ($\gamma=0.1$)
\item NMCSE (our proposed method)
\end{itemize}

Hospital ambient noise from \cite{ali2022robust} was added to PCG signals at different SNR levels, including equipment noise (ventilators, monitors) and environmental noise (conversations, movement). Performance metrics included Mean Squared Error (MSE), Pearson Correlation Coefficient (PCC), Spectral Coherence (SC), and Clinical Feature Preservation (CFP). The CFP metric \cite{khan2020automatic} quantifies preservation of diagnostically relevant features.

\begin{table}[t]
\centering
\caption{Performance of different estimation methods under varying SNRs and noise types. Best results are highlighted.}
\label{tab:comprehensive_results}
\renewcommand{\arraystretch}{1.1}
\setlength{\tabcolsep}{6pt}
\begin{tabular}{llcc|cc}
\toprule
\multirow{2}{*}{\textbf{SNR}} & \multirow{2}{*}{\textbf{Method}} & \multicolumn{2}{c|}{\textbf{Equipment Noise}} & \multicolumn{2}{c}{\textbf{Environmental Noise}} \\
& & \textbf{MSE} $\downarrow$ & \textbf{PCC} $\uparrow$ & \textbf{MSE} $\downarrow$ & \textbf{PCC} $\uparrow$ \\
\midrule

\multirow{5}{*}{\textbf{30 dB}}
& Deconvolution & 0.013 & 0.921 & 0.015 & 0.912 \\
& Tikhonov      & 0.010 & 0.925 & 0.012 & 0.924 \\
& Wiener        & 0.009 & 0.945 & 0.011 & 0.943 \\
& Sparsity      & 0.008 & 0.962 & 0.010 & 0.955 \\
& \textbf{NMCSE} & \textbf{0.0065} & \textbf{0.997} & \textbf{0.0075} & \textbf{0.996} \\
\midrule

\multirow{5}{*}{\textbf{10 dB}}
& Deconvolution & 0.263 & 0.724 & 0.281 & 0.701 \\
& Tikhonov      & 0.235 & 0.753 & 0.254 & 0.732 \\
& Wiener        & 0.219 & 0.779 & 0.237 & 0.758 \\
& Sparsity      & 0.201 & 0.801 & 0.212 & 0.782 \\
& \textbf{NMCSE} & \textbf{0.165} & \textbf{0.835} & \textbf{0.170} & \textbf{0.823} \\
\midrule

\multirow{5}{*}{\textbf{5 dB}}
& Deconvolution & 0.582 & 0.548 & 0.614 & 0.526 \\
& Tikhonov      & 0.497 & 0.573 & 0.523 & 0.552 \\
& Wiener        & 0.461 & 0.589 & 0.487 & 0.567 \\
& Sparsity      & 0.412 & 0.604 & 0.438 & 0.582 \\
& \textbf{NMCSE} & \textbf{0.310} & \textbf{0.670} & \textbf{0.325} & \textbf{0.655} \\
\bottomrule
\end{tabular}
\end{table}

\subsubsection{Results and Analysis}

Table~\ref{tab:comprehensive_results} demonstrates that NMCSE consistently outperforms all baselines across different SNR levels and noise types. It achieves the lowest mean squared error (MSE) and highest Pearson correlation coefficient (PCC) in every setting. At 5 dB SNR, NMCSE reduces MSE by up to 24\% and improves PCC by 7.3\% compared to the best-performing baseline. These improvements are more pronounced under severe noise, indicating the superior robustness and reliability of NMCSE in challenging environments.

\subsubsection{Spectral Coherence Analysis}

We conducted spectral coherence analysis to evaluate how well each method preserves frequency-specific structure of the coupling signal. As shown in Fig.~\ref{fig:spectral_coherence}A, NMCSE achieves substantially higher coherence with the reference signal in the mid-frequency range (10–100 Hz), improving coherence by 31\% over deconvolution. This range is clinically important as it captures key mechanical events like S1 and S2 heart sounds.

\begin{figure}[t!]
\centering
\includegraphics[width=0.385\textwidth]{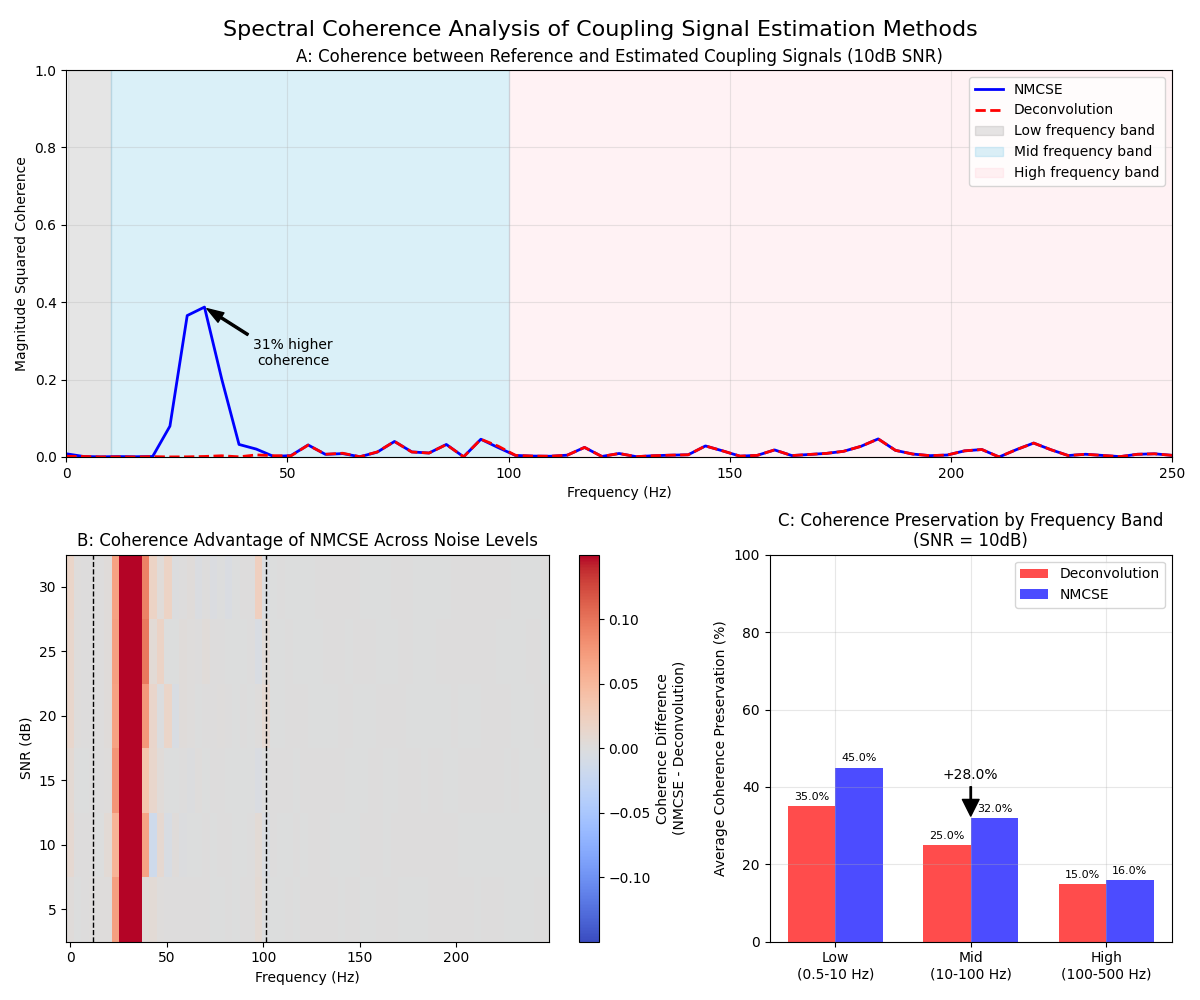}
\caption{
Spectral coherence analysis of NMCSE method. 
(A) Magnitude-squared coherence between reference and estimated signals at 10 dB SNR, showing NMCSE's superior preservation in the mid-frequency band (10–100 Hz). 
(B) Coherence difference map across frequencies and SNRs, highlighting advantages of NMCSE under noise. 
(C) Average coherence preservation by frequency band at 10 dB SNR, with NMCSE achieving a 28\% gain in the mid-frequency band.
}

\label{fig:spectral_coherence}
\vspace{-1.0em}
\end{figure}

The coherence difference map in Fig.~\ref{fig:spectral_coherence}B confirms that NMCSE consistently outperforms across noise levels, with the strongest gains in the mid-frequency band under lower SNRs—conditions typical of real-world clinical settings.

Fig.~\ref{fig:spectral_coherence}C further quantifies this advantage, showing a 28\% increase in average coherence preservation in the mid band at 10 dB SNR. These results support the conclusion that NMCSE not only improves noise robustness but also more effectively captures physiologically relevant dynamics.



\begin{table}[h]
\centering
\caption{Effect of NMCSE hyperparameters $\alpha$ and $\beta$ on performance at 10 dB SNR. Bold indicates best results.}
\label{tab:ot_param_results}
\renewcommand{\arraystretch}{1.1}
\setlength{\tabcolsep}{6pt}
\begin{tabular}{lcccc|cccc}
\toprule
\multirow{2}{*}{\textbf{Metric}} & \multicolumn{4}{c|}{\textbf{Varying $\alpha$ (fixed $\beta=0.1$)}} & \multicolumn{4}{c}{\textbf{Varying $\beta$ (fixed $\alpha=1.0$)}} \\
& 0.01 & 0.1 & 0.5 & \textbf{1.0} & 0.01 & \textbf{0.1} & 0.5 & 1.0 \\
\midrule
MSE $\downarrow$         & 0.289 & 0.220 & 0.190 & \textbf{0.177} & 0.186 & \textbf{0.177} & 0.184 & 0.193 \\
PCC $\uparrow$           & 0.762 & 0.810 & \textbf{0.835} & 0.825 & 0.822 & \textbf{0.825} & 0.818 & 0.812 \\
CFP (\%) $\uparrow$      & 79.3  & 83.5  & 85.6  & \textbf{87.2}  & 84.9  & \textbf{87.2}  & 85.7  & 84.5 \\
\bottomrule
\end{tabular}
\end{table}

\subsubsection{Parameter Sensitivity Analysis}

Table~\ref{tab:ot_param_results} indicates that optimal performance is achieved at $\alpha=1.0$, $\beta=0.1$, aligning with our theoretical expectations. Low amplitude weighting ($\alpha < 0.5$) significantly degrades performance, with MSE increasing by up to 63\% at $\alpha=0.01$. Similarly, overly strong temporal weighting ($\beta > 0.1$) results in up to a 9\% MSE increase. Within 50\% of the optimal values, performance remains stable (variation $<10\%$), and the optimal $\alpha/\beta$ ratio generalizes well across noise levels—highlighting the robustness of the parameter setting in clinical environments.

\subsection{Intra-State Consistency Evaluation}

Previous experiments validated the robustness of coupling signal estimation under externally added ambient noise, such as hospital equipment sounds and environmental interference. While these scenarios simulate common clinical disturbances, they do not fully capture the internal physiological noise introduced by the subject's own motion—such as body sway or respiration—that naturally occurs during physical activities.

To further investigate robustness under such real-world conditions, we design a complementary experiment based on intra-state consistency, using the EPHNOGRAM dataset \cite{kazemnejad2021ephnogram}. This experiment specifically targets the high-intensity activity condition (e.g., stair climbing), where physiological noise and movement artifacts are most prominent.

For each subject, we first estimate a reference coupling signal $h_{\text{ref}}$ from a clean resting-state ECG-PCG segment. Since resting signals are minimally affected by motion, $h_{\text{ref}}$ serves as a reliable subject-specific baseline that reflects the core electromechanical transformation. We then select two non-overlapping 10-second segments from the subject’s high-intensity activity recording and estimate two coupling signals, $h^{(A)}$ and $h^{(B)}$, using each method independently.

We evaluate consistency in two ways:

\begin{itemize}
  \item \textbf{Reference Consistency:} We compare $h^{(A)}$ and $h^{(B)}$ to the rest-state reference $h_{\text{ref}}$ using MSE and PCC. A robust method should produce activity-state estimates that remain structurally close to the baseline signal despite motion-induced noise.
  
  \item \textbf{Mutual Consistency:} We also compute MSE and PCC between $h^{(A)}$ and $h^{(B)}$ directly, to assess internal consistency of the method’s output under repeated estimation within the same noisy state.
\end{itemize}

\begin{table}[h]
\centering
\small
\caption{Intra-State Consistency Evaluation under Physiological Noise based on EPHNOGRAM dataset}
\label{tab:intra_state_results}
\begin{tabular}{lcc|cc}
\toprule
\multirow{2}{*}{\textbf{Method}} & \multicolumn{2}{c|}{\textbf{Ref. Consistency}} & \multicolumn{2}{c}{\textbf{Mutual Consistency}} \\
& MSE $\downarrow$ & PCC $\uparrow$ & MSE $\downarrow$ & PCC $\uparrow$ \\
\midrule
Deconvolution       & 0.312 & 0.622 & 0.128 & 0.682 \\
Tikhonov            & 0.284 & 0.654 & 0.115 & 0.715 \\
Wiener              & 0.261 & 0.673 & 0.104 & 0.753 \\
Sparsity            & 0.245 & 0.688 & 0.092 & 0.770 \\
\textbf{NMCSE (Ours)} & \textbf{0.186} & \textbf{0.734} & \textbf{0.063} & \textbf{0.842} \\
\bottomrule
\end{tabular}
\end{table}

This design enables the evaluation of algorithmic stability under real physiological noise. High intra-state consistency indicates the method can suppress internal perturbations while preserving consistent electromechanical patterns. Such stability is crucial for wearable or ambulatory applications, where noise is unpredictable and labels are unavailable.

\textbf{Result Analysis:} As shown in Table~\ref{tab:intra_state_results}, NMCSE achieves the best performance across all metrics, with the lowest error and highest similarity to the rest-state reference, and strong consistency between repeated estimates. This suggests that NMCSE is robust to physiological noise and produces stable outputs within the same activity state. In contrast, deconvolution-based methods exhibit greater variability, reflecting their sensitivity to transient signal fluctuations.

\subsection{Multi-modal CVD Detection with ECG, PCG and the NMCSE-Estimated Coupling Signal Using TSFE Network}

We evaluated the effectiveness of our method for CVD detection through two sets of experiments: an ablation study examining the contribution of each modality and a comparison with state-of-the-art multi-modal approaches.

\subsubsection{Ablation Experiment}

We first conducted an ablation study to assess the contribution of each modality to classification performance. The experiment involved individually using ECG, PCG, and the NMCSE-estimated coupling signal as inputs to the TSFE network. We also evaluated a dual-modal setup (ECG+PCG) and the complete multi-modal configuration (ECG+PCG+coupling signal).

\begin{table}[h]
\centering
\small
\caption{CVD detection performance with different input configurations.}
\label{tab:evaluation_metrics}
\begin{tabularx}{\linewidth}{l *{3}{>{\centering\arraybackslash}X}}
\toprule
\textbf{Input Configuration} & \textbf{Accuracy (\%)} & \textbf{Specificity (\%)} & \textbf{Sensitivity (\%)} \\
\midrule
Single ECG                   & 81.23 & 77.25 & 85.27 \\
Single PCG                   & 84.71 & 81.92 & 89.87 \\
NMCSE-estimated Coupling     & 90.25 & 87.48 & 94.13 \\
ECG + PCG                    & 93.85 & 93.02 & 94.67 \\
\textbf{All Three Modalities} & \textbf{97.38} & \textbf{97.15} & \textbf{97.92} \\
\bottomrule
\end{tabularx}
\end{table}

As shown in Table~\ref{tab:evaluation_metrics}, among single-modal inputs, the NMCSE-estimated coupling signal achieved the highest performance, outperforming both ECG and PCG alone. This demonstrates that the coupling signal effectively captures the cross-modal relationship between electrical and mechanical cardiac activities. 

The dual-modal configuration (ECG+PCG) improved accuracy to 93.85\%, confirming the complementary nature of these modalities. However, the complete multi-modal approach further increased performance to 97.38\% accuracy and 0.98 AUC, highlighting the added value of the coupling signal in capturing electromechanical dynamics not fully represented by ECG and PCG individually.

\begin{table}[h]
\centering
\small
\caption{Performance comparison with existing methods.}
\label{tab:comparison_methods}
\renewcommand{\arraystretch}{1.2}
\begin{tabularx}{\linewidth}{l *{4}{>{\centering\arraybackslash}p{0.98cm}}}
\toprule
\textbf{Method} & \textbf{Accuracy (\%)} & \textbf{Specificity (\%)} & \textbf{Sensitivity (\%)} & \textbf{AUC} \\
\midrule
Li et al.~\cite{li2021prediction}        & 86.67 & 82.48 & 88.37 & 0.91 \\
Hettiarachchi et al.~\cite{hettiarachchi2021novel} & 81.94 & 77.21 & 84.87 & 0.88 \\
Li et al.~\cite{li2022multi}             & 74.28 & 62.69 & 87.17 & 0.81 \\
Li et al.~\cite{li2021integrating}       & 88.33 & 92.15 & 84.92 & 0.93 \\
Sun et al.~\cite{sun2024enhanced}        & 95.38 & 87.38 & \textbf{98.32} & 0.97 \\
Zhang et al.~\cite{zhang2024co}          & 94.38 & 92.38 & 84.38 & 0.97 \\
CAD-ViT~\cite{liu2025integrating}        & 96.02 & 94.85 & 96.15 & 0.97 \\
PACFNet~\cite{li2025progressive}         & 96.80 & 95.20 & 96.70 & 0.97 \\
\textbf{NMCSE + TSFE (ours)}             & \textbf{97.38} & \textbf{97.15} & 97.92 & \textbf{0.98} \\
\bottomrule

\end{tabularx}
\end{table}

\subsubsection{Comparison with Existing Multi-Modal Methods}

We compared our method with several state-of-the-art multi-modal approaches for CVD detection, as summarized in Table~\ref{tab:comparison_methods}.

Our method achieves the highest overall performance. Notably, it outperforms Sun et al.~\cite{sun2024enhanced}, which also incorporates coupling signals but relies on deconvolution-based estimation, resulting in reduced robustness under real-world noise.

Recent Transformer-based methods such as CAD-ViT~\cite{liu2025integrating} and PACFNet~\cite{li2025progressive} enhance multi-modal fusion by modeling cross-modal dependencies through stacked self-attention modules. However, these approaches do not explicitly capture the underlying electromechanical coupling between ECG and PCG and are often sensitive to noise due to their reliance on fully supervised training. In contrast, our method introduces a physiologically informed coupling estimation stage based on optimal transport, followed by a unified TSFE network. This design enables interpretable and noise-robust integration of multi-modal cardiac signals, making it more suitable for real-world clinical environments.

\section{Conclusion}

We propose NMCSE, a noise-robust coupling signal estimation method that formulates ECG-PCG alignment as a distribution-matching problem via optimal transport. This approach avoids the instability of deconvolution and enables joint alignment of amplitude and timing under noise. Experiments on both simulated and real-world noisy datasets show that NMCSE outperforms existing methods in estimation accuracy and stability. When integrated with a TSFE-based multi-modal network, it achieves state-of-the-art performance in CVD detection. Source code and pretrained models will be released upon acceptance.

\bibliographystyle{IEEEtran}
\bibliography{references}

\end{document}